\theoremstyle{thmstyleone}%
\newtheorem{theorem}{Theorem}
\newtheorem{proposition}[theorem]{Proposition}%
\theoremstyle{thmstyletwo}%
\theoremstyle{thmstylethree}%
\newcommand{\be}{\begin{equation}}
\newcommand{\ee}{\end{equation}}
\numberwithin{equation}{section}
\begin{document}

\title[On the exactly-solvable semi-infinite quantum well ...]{On the exactly-solvable semi-infinite quantum well of the non-rectangular step-harmonic profile}


\author*[1]{\fnm{E.I.} \sur{Jafarov}}\email{ejafarov@physics.science.az}

\author[1]{\fnm{S.M.} \sur{Nagiyev}}\email{smnagiyev@physics.ab.az}

\affil*[1]{\orgdiv{Institute of Physics}, \orgname{Azerbaijan National Academy of Sciences}, \orgaddress{\street{Javid av. 131}, \city{Baku}, \postcode{AZ1143}, \country{Azerbaijan}}}


\abstract{An exactly-solvable model of the non-relativistic harmonic oscillator with a position-dependent effective mass is constructed. The model behaves itself as a semi-infinite quantum well of the non-rectangular profile. Such a form of the profile looks like a step-harmonic potential as a consequence of the certain analytical dependence of the effective mass from the position and semiconfinement parameter $a$. Both states of the discrete and continuous spectrum are studied. In the case of the discrete spectrum, wavefunctions of the oscillator model are expressed through the Bessel polynomials. The discrete energy spectrum is non-equidistant and finite as a consequence of its dependence on parameter $a$, too. In the case of the continuous spectrum, wavefunctions of the oscillator model are expressed through the $_1F_1$ hypergeometric functions. At the limit, when the parameter $a$ goes to infinity, both wavefunctions, and the energy spectrum of the model under construction correctly reduce to corresponding results of the usual non-relativistic harmonic oscillator with a constant effective mass. Namely, wavefunctions of the discrete spectrum recover wavefunctions in terms of the Hermite polynomials, and wavefunctions of the continuous spectrum simply vanish. We also present a new limit relation that reduces Bessel polynomials directly to Hermite polynomials and prove its correctness using the mathematical induction technique.}

\keywords{Semi-infinite potential, Exact solution, Harmonic oscillator, Position-dependent effective mass, Bessel polynomials}



\maketitle

\section{Introduction}

Low-dimensional quantum systems, especially quantum well systems are of great interest in both theoretical and experimental physics as well as related areas due to the recent impact of such quantum systems on the development of modern nanotechnologies and advanced devices. Many physical phenomena related to modern nanotechnologies can be successfully explained through the exact or numerical solution of one-dimensional Schr\"odinger equation with effective mass $m_0$ and time-independent potential $V\left(x\right)$. Quantum well systems are among them, too. Most of the quantum well potentials leading to the exact solution of the corresponding Schr\"odinger equation are symmetric in respect to position inversion due to that, such symmetric behavior of the potential substantially simplifies the exact solution of the Schr\"odinger equation for it. However, the recent development of the molecular-beam epitaxy and related pioneering methods on the experimental fabrication of the artificial quantum well structures with profiles differing from the symmetric ones~\cite{miller1984a,miller1984b,miller1985,gossard1986} also requires to describe accurately a number of experimental results in the nanotechnologies via the exactly-solvable potentials behaving themselves as confined quantum well systems with semi-infinite property and non-rectangular profile.

In the present paper, we report on the new quantum oscillator model that behaves itself as a semi-infinite quantum well of the non-rectangular profile. If to be exact, then this quantum system is confined by an infinitely high wall at an arbitrary negative finite value of a position and with a finite wall at an arbitrary positive finite value of the position. Also, the system behaves itself as a non-relativistic quantum harmonic oscillator within these two values of the position, therefore, it has a completely non-rectangular profile. In other words, its behavior almost overlaps with the confined quantum system under the so-called step-harmonic potential~\cite{rizzi2010,amthong2014}, but differs from it that the model under the current study does not vary abruptly. Such a behavior of the quantum system is achieved thanks to the introduction of the effective mass varying with position~\cite{morris2015,morris2017}.

We structured our paper as follows: the next section presents a brief review of the exact solution of the non-relativistic quantum harmonic oscillator within the canonical approach. Then, Sect. 3 consists of the main result of the current paper -- it is mainly devoted to the construction of the exactly solvable model of the non-relativistic harmonic oscillator with a position-dependent effective mass, which behaves itself as a semi-infinite quantum well of the non-rectangular profile. In the final section, we discuss the limit cases, when the parameter $a$ goes to infinity and as a consequence, the wavefunctions of the model under construction expressed by the Bessel polynomials completely recover the wavefunctions of the non-relativistic quantum harmonic oscillator within the canonical approach as well as energy spectrum of non-equidistant and finite form becomes equidistant and infinite.

\section{The non-relativistic quantum harmonic oscillator within the canonical approach}

In this informative section, we are going to provide briefly mathematical expressions belonging to the non-relativistic quantum harmonic oscillator within the canonical approach. These expressions also can be found easily in most quantum mechanics textbooks. But, we decided to include them in our paper, because, they will be useful during the computations being done in the next section. 

Wavefunctions of the stationary states and the energy spectrum of the non-relativistic quantum harmonic oscillator within the canonical approach can be obtained exactly from the one-dimensional time-independent Schr\"odinger equation

\be
\label{sheq-gen}
\hat H\psi \left( x \right) = E\psi \left( x \right),
\ee
where the full Hamiltonian is a sum of the kinetic and potential energy operators

\be
\label{h}
\hat H = \hat H_0  + V\left( x \right),
\ee
with the kinetic energy operator of the following general form:

\be
\label{h0}
\hat H_0  = \frac{{\hat p_x ^2 }}{{2m_0 }}.
\ee

Here, the momentum operator $\hat p_x$ being defined within the canonical approach is known as

\be
\label{p_x}
\hat p_x  =  - i\hbar \frac{\mathrm{d}}{{\mathrm{d}x}}.
\ee

We are going to deal with the model based on the harmonic oscillator, therefore, its potential $V\left( x \right)$ with states bounded at infinity is defined as

\be
\label{v-x}
V\left( x \right) = \frac{{m_0 \omega ^2 x^2 }}{2},\quad  - \infty  < x <  + \infty .
\ee

Here, $\omega$ is its constant angular frequency. Substitution of (\ref{h0})--(\ref{v-x}) at (\ref{h}) leads to the following second order differential equation generated from Eq.(\ref{sheq-gen}):

\be
\label{sheq-osc}
\frac{{\hbar ^2 }}{{2m_0 }}\frac{{\mathrm{d}^2 \psi }}{{\mathrm{d}x^2 }} + \left( {E - \frac{{m_0 \omega ^2 x^2 }}{2}} \right)\psi  = 0.
\ee

Exact solution of this equation allows to obtain equidistant energy spectrum $E \equiv E_n$ and wavefunctions of the stationary states $\psi \left( x \right) \equiv \psi_n \left( x \right)$ of the following analytical form:

\be
\label{en}
E_n  = \hbar \omega \left( {n + \frac{1}{2}} \right),\quad n = 0,1,2, \ldots ,
\ee

\be
\label{wf-on}
\psi _n \left( x \right) = \frac{1}{{\sqrt {2^n n!} }}\left( {\frac{{m_0\omega }}{{\pi \hbar }}} \right)^{{\textstyle{1 \over 4}}} \mathrm{e}^{ - \frac{{m_0\omega x^2 }}{{2\hbar }}} H_n \left( {\sqrt {\frac{{m_0\omega }}{\hbar }} x} \right).
\ee

Here, $H_n \left( { x} \right)$ are Hermite polynomials defined in terms of the $_2F_0$ hypergeometric functions as follows~\cite{koekoek2010}:

\be
\label{hermite}
H_n (x) = (2x)^n \,_2 F_0 \left( {\begin{array}{*{20}c}
   {\begin{array}{*{20}c}
   { - n/2, - (n - 1)/2}  \\
    -   \\
\end{array};} & { - \frac{1}{{x^2 }}}  \\
\end{array}} \right).
\ee

These polynomials also satisfy the following recurrence relations, which we will use in the final section of the current paper:

\be
\label{rr-h}
H_{n + 1} \left( x \right) = 2xH_n \left( x \right) - 2nH_{n - 1} \left( x \right).
\ee

$\psi _n \left( x \right)$ wavefunctions (\ref{wf-on}) are orthonormalized and they satisfy the following orthogonality relation:

\be
\label{wf-ort}
\int\limits_{ - \infty }^\infty  {{\psi _m ^* (x)}{\psi} _n (x)dx}  = \delta _{mn} .
\ee

Hamiltonian (\ref{h}) being sum of the non-relativistic canonical kinetic (\ref{h0}) and harmonic potential energy operators (\ref{v-x}) can be easily factorized in terms of the harmonic oscillator creation and annihilation operators as follows~\cite{dirac1927,infeld1951,messiah1966}:

\be
\label{h-f}
\hat H = \hbar \omega \left( {\hat a^ +  \hat a^- + \frac{1}{2}} \right).
\ee

Here, harmonic oscillator creation and annihilation operators have the following first-order differential analytical expression:

\begin{eqnarray}
\label{aa+}
 \hat a^ +   = \frac{1}{{\sqrt 2 \lambda _0 }}\left( {\lambda _0 ^2 x - \frac{\mathrm{d}}{{\mathrm{d}x}}} \right), \\ 
 \hat a^- = \frac{1}{{\sqrt 2 \lambda _0 }}\left( {\lambda _0 ^2 x + \frac{\mathrm{d}}{{\mathrm{d}x}}} \right),  \nonumber
\end{eqnarray}
where, $\lambda _0  = \sqrt {\frac{{m_0 \omega }}{\hbar }}$.

We note here two main properties of these operators, one of which is that they do not commute, therefore their commutation is one, i.e.

\[
\left[ {\hat a^-,\hat a^ +  } \right] = 1,
\]
and the second one is the following equation, where the action of the annihilation operator $\hat a^-$ to the ground state of wavefunction (\ref{wf-on}) is zero:

\[
\hat a^- \psi _0 \left( x \right) = 0.
\]

\section{The model of the semi-infinite quantum well of the non-rectangular profile}

In this section, we are going to construct the exactly solvable model of the non-relativistic harmonic oscillator with a position-dependent effective mass, which behaves itself as a semi-infinite quantum well of the non-rectangular profile. In previous section, it was highlighted that the non-relativistic harmonic oscillator with a potential defined via (\ref{v-x}) has the wavefunctions (\ref{wf-on}), which vanish at $\pm\infty$. Then, we decided that if we want to make from the non-relativistic harmonic oscillator a semi-infinite quantum well of the non-rectangular profile in the form of the step-harmonic potential, here the effective mass certainly depending on the position can be a powerful tool. Before showing a certain implementation of this idea, we want to share some known information about the conception of the position-dependent effective mass in quantum mechanics. This conception is presented in~\cite{bendaniel1966}, which supposed that the change of the band structure of the independent-particle model of the superconductor~\cite{harrison1961} on the tunneling experiment~\cite{giaever1960a,giaever1960b} should be simulated by a spatially varying effective mass $M \left(x\right)$. In other words, it is necessary to implement a replacement $m_0 \to M \left(x\right)$ in full Hamiltonian~(\ref{h}) and then to introduce a new analogue of the kinetic energy operator~(\ref{h0}). The following simpler realization of the Hermitian kinetic energy operator, now called as BenDaniel-Duke kinetic energy operator has been introduced:

\begin{equation}
\label{h0-pdem}
\hat H_0 \equiv \hat H_0^{BD} =  - \frac{{\hbar ^2 }}{2}\frac{\mathrm{d}}{{\mathrm{d}x}}\frac{1}{{M (x)}}\frac{\mathrm{d}}{{\mathrm{d}x}}.
\end{equation}

\begin{figure}[t!]
\begin{center}
\resizebox{0.35\textwidth}{!}{%
  \includegraphics{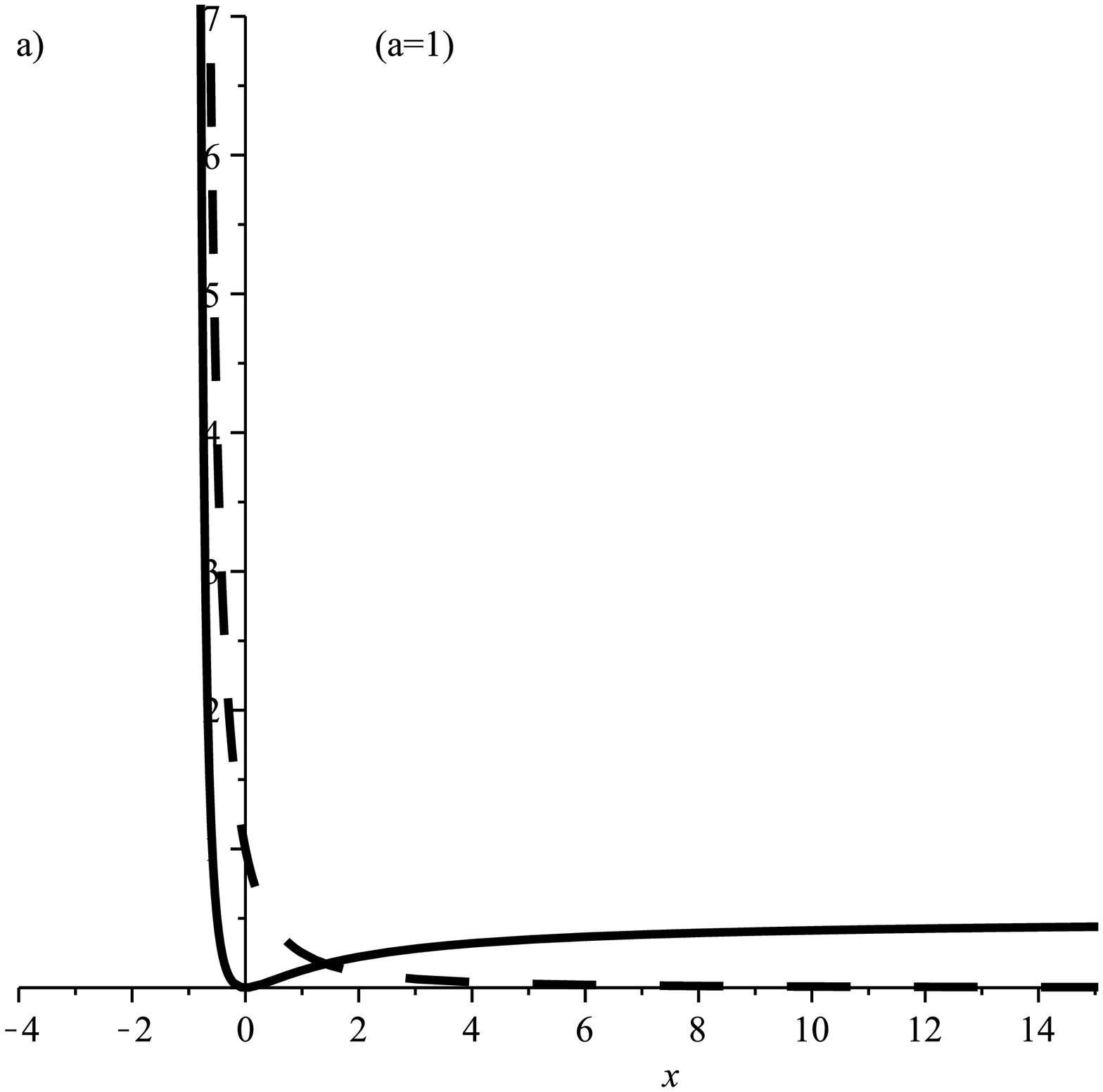}
}
\resizebox{0.35\textwidth}{!}{%
  \includegraphics{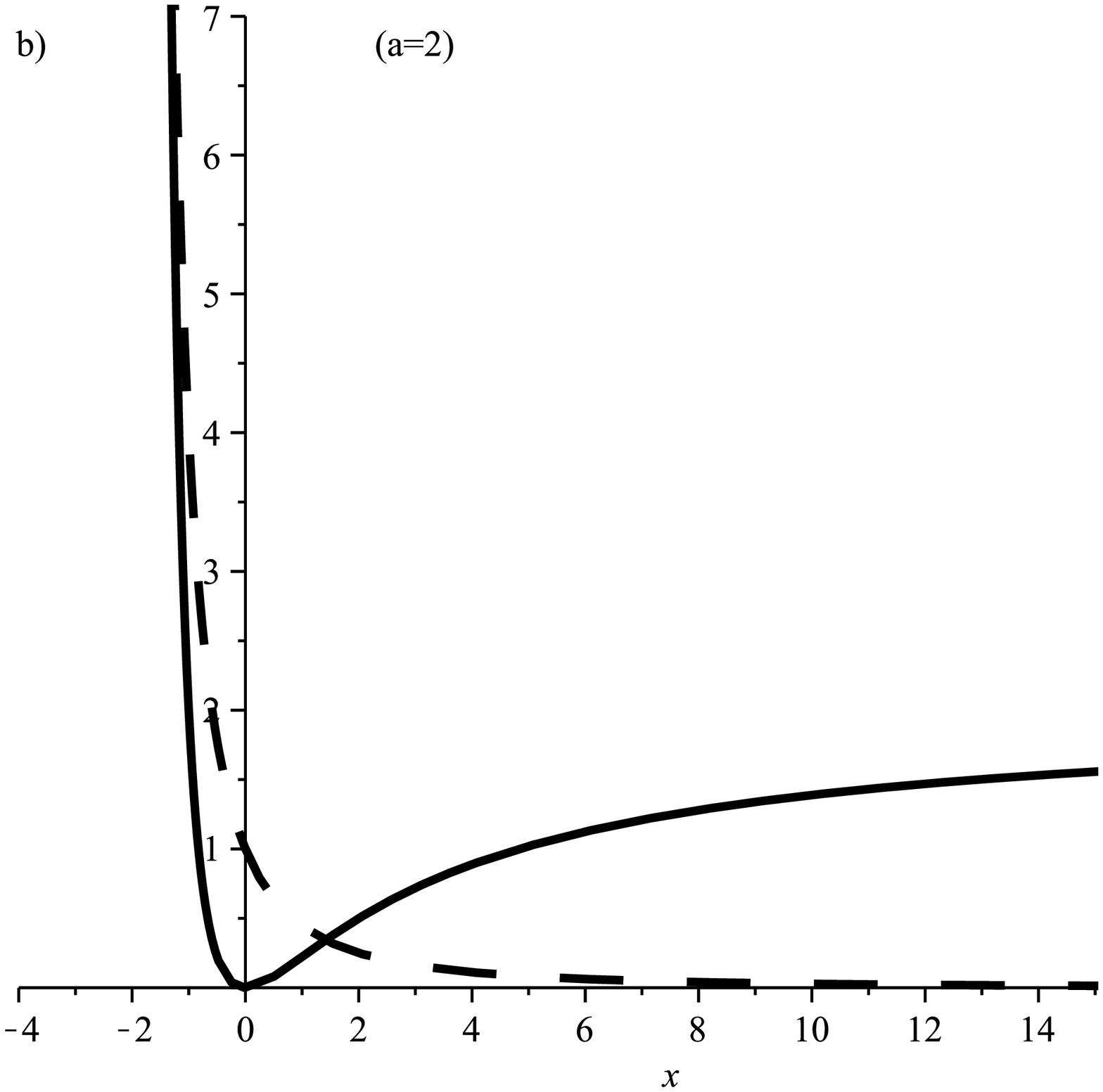}
}\\
\resizebox{0.35\textwidth}{!}{%
  \includegraphics{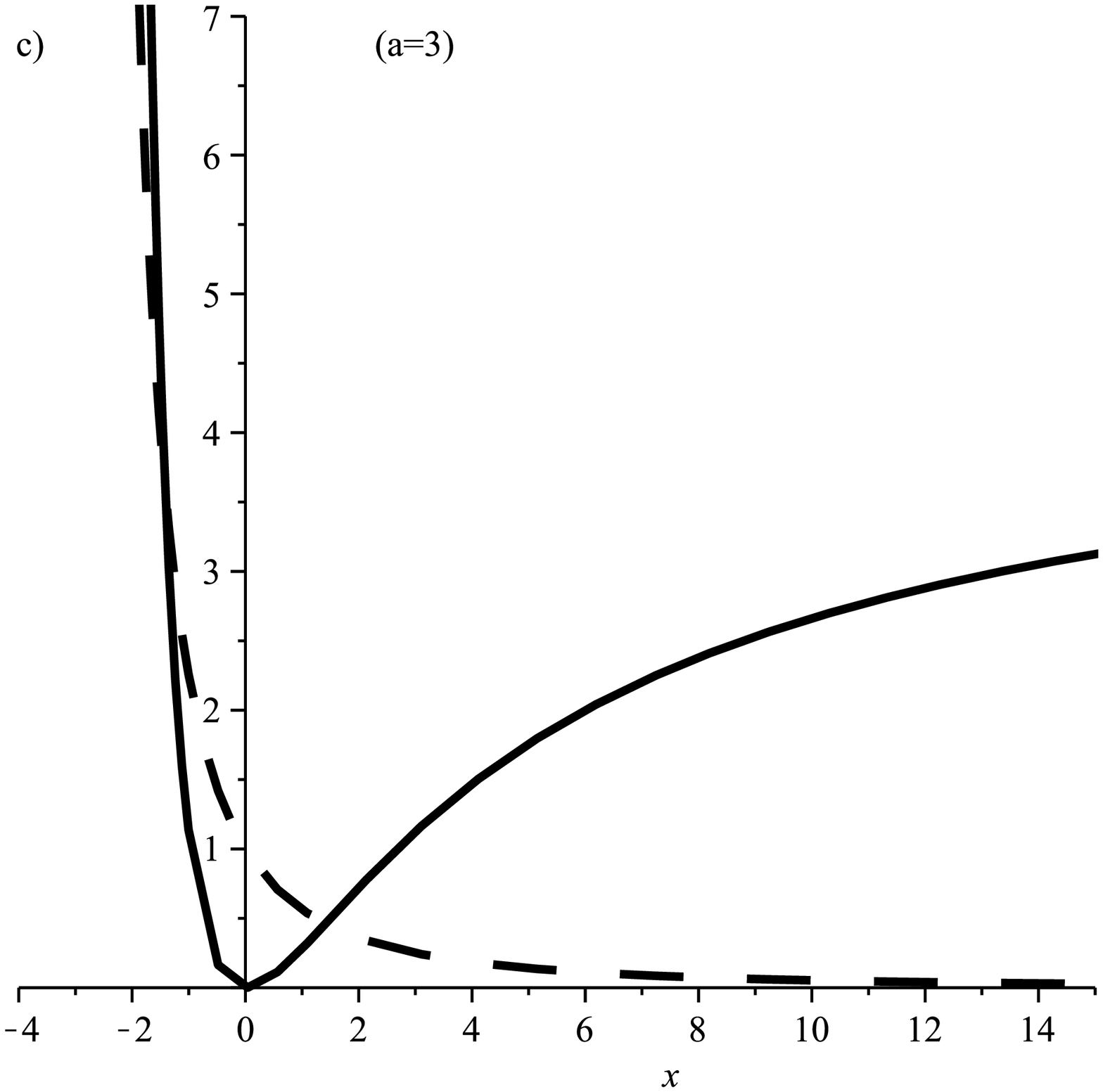}
}
\resizebox{0.35\textwidth}{!}{%
  \includegraphics{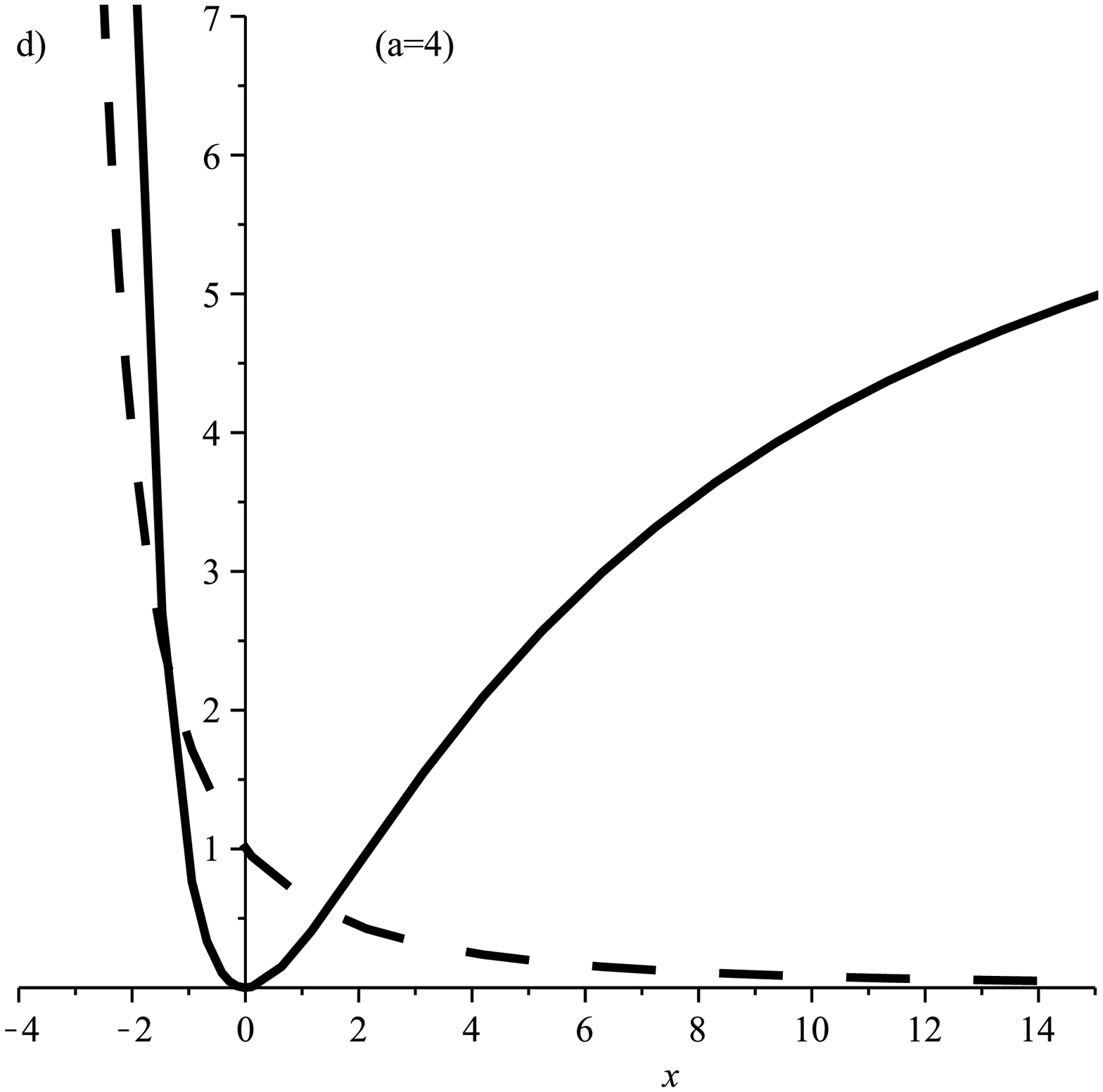}
}
\end{center}
\caption{
Behavior of the semi-infinite quantum well potential~(\ref{vx}) of the non-rectangular step-harmonic profile (solid line) and the effective mass $M\left( x \right)$ varying by position~(\ref{m-x}) (dashed line) for the confinement parameter: a) $a=1$; b) $a=2$; c) $a=3$; d) $a=4$ ($m_0=\omega=\hbar=1$, hence, $\lambda _0 = 1$).} 
\label{fig.1}
\end{figure}

Despite that there is a number of other definitions for the position-dependent effective mass kinetic energy operator~\cite{gora1969,zhu1983,vonroos1983,morrow1984,li1993,li2009,lima2012,nobre2015,mustafa2019,elnabulsi2020}, we are going to use in our computations BenDaniel--Duke definition~(\ref{h0-pdem}) due to that from mathematics viewpoint, the Schr\"odinger equation with the other kinetic energy operator definitions does not differ significantly from BenDaniel--Duke one and the method used here can be later easily applied for them, too. Also, it is shown in~\cite{kolesnikov1999} that if to modify BenDaniel--Duke kinetic energy operator (\ref{h0-pdem}), then additional singular terms will appear in its modified version and they can lead to a discontinuity of the wavefunction.

By substitution of (\ref{h0-pdem}) at (\ref{h}) one has the following Schr\"odinger equation:

\be
\label{seq-pdem1}
\frac{{\hbar ^2 }}{2}\frac{\mathrm{d}}{{\mathrm{d}x}}\frac{1}{M}\frac{\mathrm{d}}{{\mathrm{d}x}}\psi  + \left( {E - V} \right)\psi  = 0.
\ee

Here, by definition $M \equiv M\left(x\right)$ and $V \equiv V\left(x\right)$. As a next step, one needs to introduce a potential $V\left(x\right)$ that will behave itself as a semi-infinite quantum well of the non-rectangular profile. In other words, the quantum system under study will be confined at an arbitrary negative finite value of a position by an infinitely high wall ($x=-a$), but at an arbitrary positive finite value of position with a finite wall and is non-rectangular profile between these two walls will be achieved thanks to its behavior as a non-relativistic quantum harmonic oscillator within these two values of the position. Therefore, it looks like a step-harmonic potential, but with no abrupt change. Analytical expression of such a potential $V\left(x\right)$ can be defined as follows:

\be
\label{vx}
V\left( x \right) = \left\{ \begin{array}{ll}
\frac{{m_0\omega ^2 a^2}}{2}, & \hbox{ for }\quad x \to +\infty ,\\ 
\frac{{M\omega ^2 x^2 }}{2}, & \hbox{ for } \quad x >  - a, \\ 
  + \infty , & \hbox{ for } \quad x \leq  - a ,\\ 
 \end{array} \right.
\qquad (a>0).
\ee

It is clear that the depth of the semi-infinite well is as follows:

\[
V_\infty   \equiv \mathop {\lim }\limits_{x \to  + \infty } V\left( x \right) = \frac{{m_0 \omega ^2 a^2 }}{2}.
\]

Then, the basic principles of quantum mechanics require that the energy spectrum is discrete if $E<V_\infty$ and it is continuous if $E>V_\infty$. From definition of potential (\ref{vx}), it is also clear that the height of a finite wall depends on the parameter $a$ that defines the negative value of the position for which the potential becomes infinite. Such a general behavior can be completely achieved via the following analytical definition of $M \equiv M\left(x\right)$:

\be
\label{m-x}
M\left( x \right) = \frac{{a^2 m_0 }}{{\left( {a + x} \right)^2 }}.
\ee

In fig.1, we present a comparative behavior of the semi-infinite quantum well potential~(\ref{vx}) of the non-rectangular step-harmonic profile and the effective mass $M\left( x \right)$ varying by position~(\ref{m-x}) for different values of the confinement parameter $a$, namely, for $a=1$; $2$; $3$ and $4$. One observes that the behavior of both functions extracted from Eqs.~(\ref{vx}) and (\ref{m-x}) completely proves our main goal. The potential~(\ref{vx}) behaves itself as a semi-infinite quantum well potential of the non-rectangular step-harmonic profile and the position-dependent effective mass defined analytically via Eq.(\ref{m-x}) directly defines its behavior as confined by an infinitely high wall at an arbitrary negative finite value of a position and with a finite wall at an arbitrary positive finite value of the position as well as a non-relativistic quantum harmonic oscillator within these two values of the position. Greater values of the parameter $a$ exhibit evidence of the disappearance of the confinement effect from both negative and positive sides of the quantum-well system.

Substitution of (\ref{vx}) and (\ref{m-x}) at (\ref{seq-pdem1}) leads to the following second-order differential equation:

\be
\label{seq-pdem2}
\frac{{\mathrm{d}^2 \psi }}{{\mathrm{d}x^2 }} + \frac{2}{{a + x}}\frac{{\mathrm{d}\psi }}{{\mathrm{d}x}} - \left( {\frac{{\frac{{m_0 ^2 \omega ^2 a^4 }}{{\hbar ^2 }}x^2 }}{{\left( {a + x} \right)^4 }} - \frac{{\frac{{2m_0 a^2 E}}{{\hbar ^2 }}}}{{\left( {a + x} \right)^2 }}} \right)\psi  = 0.
\ee

One applies the transition to the dimensionless variable $\xi  = \frac{x}{a}$ that slightly changes Eq.(\ref{seq-pdem2}) to the following one:

\be
\label{seq-pdem3}
\frac{{\mathrm{d}^2 \psi }}{{\mathrm{d}\xi ^2 }} + \frac{2}{{1 + \xi }}\frac{{\mathrm{d}\psi }}{{\mathrm{d}\xi }} - \left( {\frac{{\frac{{m_0 ^2 \omega ^2 a^4 }}{{\hbar ^2 }}\xi ^2 }}{{\left( {1 + \xi } \right)^4 }} - \frac{{\frac{{2m_0 a^2 E}}{{\hbar ^2 }}}}{{\left( {1 + \xi } \right)^2 }}} \right)\psi  = 0.
\ee

Introduction of the following constants

\be
\label{const}
c_0  = \frac{{2m_0 a^2 E}}{{\hbar ^2 }},\quad c_2  =  c_0  + \lambda _0 ^4 a^4 ,
\ee
and some easy mathematical tricks slightly simplify (\ref{seq-pdem3}) as follows:

\be
\label{seq-pdem4}
\psi '' + \frac{2}{{1 + \xi }}\psi ' + \frac{{c_0 \left( {1 + \xi } \right)^2  + \left( {c_0  - c_2 } \right)\xi ^2 }}{{\left( {1 + \xi } \right)^4 }}\psi  = 0.
\ee

Here, $\psi ' \equiv \frac{{\mathrm{d}\psi }}{{\mathrm{d}\xi }}$ and $\psi '' \equiv \frac{{\mathrm{d}^2\psi }}{{\mathrm{d}\xi^2 }}$. 

We look for the solution of the wave function as follows:

\be
\label{psi1}
\psi  = \varphi y,
\ee
where, $\varphi \equiv \varphi \left(\xi \right)$ has the following general analytical form:

\be
\label{phi1}
\varphi  = \left( {1 + \xi } \right)^A \mathrm{e}^{\frac{B}{{1 + \xi }}} .
\ee

Here, one needs to find exactly both parameters $A$ and $B$ through the substitution of (\ref{phi1}) together with (\ref{psi1}) at (\ref{seq-pdem4}). Long straightforward computations lead to the following statement that it is possible only if

\be
\label{ab}
A=B=\varepsilon \lambda _0 ^2 a^2, \quad \varepsilon =\pm 1,
\ee
thereof, one obtains for $\varphi$

\be
\label{phi2}
\varphi  = \left( {1 + \xi } \right)^{\varepsilon \lambda _0 ^2 a^2 } \mathrm{e}^{\frac{{\varepsilon \lambda _0 ^2 a^2 }}{{1 + \xi }}} .
\ee

Now, one needs to define $\varepsilon $. We take into account that for our quantum model the following limit relations have to hold:

\[
\mathop {\lim }\limits_{\xi \to  + \infty } \varphi  = \mathop {\lim }\limits_{\xi \mathop  \to \limits   - 1^+} \varphi  = 0.
\]

This means that 

\[
\varepsilon  =  - 1.
\]

Then, the final analytical expression of $\varphi$ is as follows:

\be
\label{phi3}
\varphi  = \left( {1 + \xi } \right)^{- \lambda _0 ^2 a^2 } \mathrm{e}^{\frac{{- \lambda _0 ^2 a^2 }}{{1 + \xi }}} .
\ee

Thanks to exact expression of (\ref{phi3}), the second-order differential equation for $\psi$ (\ref{seq-pdem4}) transfers to the following second-order differential equation for $y$:

\be
\label{y1}
\left( {\xi  + 1} \right)^2 y'' + 2\left[ {\left( {1 - \lambda _0 ^2 a^2 } \right)\xi  + 1} \right]y' = \left( {\lambda _0 ^2 a^2  - c_0 } \right)y.
\ee

Further, one needs to solve Eq.(\ref{y1}) to obtain exact eigenvalues of the energy spectrum. First, it is necessary to take into account that the polynomial $\sigma \left( \xi  \right) = \left( {\xi + 1 } \right)^2$ has multiple roots. Therefore, following the method from~\cite{nikiforov1988}, it is better to apply the transition to the new variable $\tau$ as follows:

\be
\label{tau}
\tau  = \frac{1}{{\xi  + 1}}.
\ee

Substitution of (\ref{tau}) at Eq.(\ref{y1}) leads to

\be
\label{y2}
\frac{{\mathrm{d}^2 y}}{{\mathrm{d}\tau ^2 }} + 2\lambda _0^2 a^2 \frac{{1 - \tau }}{\tau }\frac{{\mathrm{d}y}}{{\mathrm{d}\tau }} + \frac{{c_0  - \lambda _0^2 a^2 }}{{\tau ^2 }}y = 0.
\ee

We look for its solution as follows:

\be
\label{y-a}
y = \tau ^\gamma  v\left( \tau  \right),
\ee
where, $\gamma$ is an arbitrary parameter, which will be defined below. As a result of the simple computations, we obtain the following equation for $v\left( \tau  \right)$:

\be
\label{v1}
\frac{{\mathrm{d}^2 v}}{{\mathrm{d}\tau ^2 }} + \frac{{\mu  - 2\lambda _0^2 a^2 \tau }}{\tau }\frac{{\mathrm{d}v}}{{\mathrm{d}\tau }} + \frac{{\rho  - \delta \tau }}{{\tau ^2 }}v = 0,
\ee
where $\mu  = 2\left( {\gamma  + \lambda _0^2 a^2 } \right)$, $\rho  = c_0  - \lambda _0^2 a^2  + \gamma \left( {\gamma  - 1} \right) + 2\gamma \lambda _0^2 a^2 $ and $\delta  = 2\gamma \lambda _0^2 a^2$.

Now we choose the constant parameter $\gamma$ in such a way that it satisfies the condition $\rho=0$. Hence we get

\be
\label{gamma}
\gamma  = \frac{{1 - 2\lambda _0^2 a^2  + e\sqrt {1 + 4\lambda _0^4 a^4  - 4c_0 } }}{2},\quad e =  \pm 1.
\ee

As a consequence of this condition, one obtains from Eq.(\ref{v1}) the following equation for $v\left(\tau \right)$:

\be
\label{v2}
\tau \frac{{\mathrm{d}^2 v}}{{\mathrm{d}\tau ^2 }} + \left( {\mu  - 2\lambda _0^2 a^2 \tau } \right)\frac{{\mathrm{d}v}}{{\mathrm{d}\tau }} - \delta v = 0.
\ee

Now, if to change the variable from $\tau$ to $z=2\lambda _0^2 a^2\tau$, then the equation above will overlap with the following hypergeometric type equation~\cite{landau1991}:

\be
\label{v3}
z\frac{{\mathrm{d}^2 v}}{{\mathrm{d}z^2 }} + \left( {\mu  - z} \right)\frac{{\mathrm{d}v}}{{\mathrm{d}z}} - \gamma v = 0.
\ee

Analytical solution to this equation in terms of the $_1F_1$ hypergeometric functions is well known:

\be
\label{v}
v = C_1 \cdot \,_1 F_1 \left( {\begin{array}{*{20}c}
   \gamma   \\
   \mu   \\
\end{array};2\lambda _0^2 a^2 \tau } \right) + C_2 \cdot\tau^{1-\mu} \,_1 F_1 \left( {\begin{array}{*{20}c}
   {\gamma  - \mu  + 1}  \\
   {2 - \mu }  \\
\end{array};2\lambda _0^2 a^2 \tau } \right).
\ee

\subsection{The states of the discrete spectrum ($E<V_\infty$)}

Condition $E<V_\infty$ requires for $\gamma$ parameter from definition (\ref{gamma}) that $e=+1$ is only possible case. Therefore, one simplifies

\be
\label{gamma1}
\gamma  = \frac{{1 - 2\lambda _0^2 a^2  + \sqrt {1 + 4\lambda _0^4 a^4  - 4c_0 } }}{2}.
\ee

Moreover, the condition for the positivity of the root expression should be satisfied, too:

\be
\label{pc}
1 + 4\lambda _0^2 a^2  - 4c_0  = \left( {1 - 2\lambda _0^2 a^2 } \right)^2  + 4\lambda _0^2 a^2  - 4c_0  > 0,
\ee
where, $\lambda _0^2 a^2-c_0<0$. Due to that $\mu  = 1 + \sqrt {1 + 4\lambda _0^4 a^4  - 4c_0 }  > 1$, hence the finiteness requirement at value $\tau=0$ for function $v\left(\tau \right)$ defined through (\ref{v}) leads to $C_2=0$. Then, we obtain an analytical expression of function $y$ (\ref{y-a}) as follows:

\[
y\left( \tau  \right) = \tau ^\gamma  \,_1 F_1 \left( {\begin{array}{*{20}c}
   \gamma   \\
   \mu   \\
\end{array};2\lambda _0^2 a^2 \tau } \right).
\]

Analytical expression of the wavefunction (\ref{psi1}) also can be written down as follows:

\be
\label{psi2}
\psi \left( \xi  \right) = C\left( {1 + \xi } \right)^{ - \gamma  - \lambda _0 ^2 a^2 } \mathrm{e}^{ - \frac{{\lambda _0 ^2 a^2 }}{{1 + \xi }}} \,_1 F_1 \left( {\begin{array}{*{20}c}
   \gamma   \\
   \mu   \\
\end{array};2\frac{{\lambda _0 ^2 a^2 }}{{1 + \xi }}} \right), \qquad C \equiv C_1.
\ee

It is finite at $\xi \to \infty$. Then, another requirement $\psi \left( -1  \right) =0$ leads to quantization of the energy through $\gamma=-n$, i.e.

\[
\sqrt {1 + 4\lambda _0 ^4 a^4  - 4c_0 }  = 2\lambda _0 ^2 a^2  - 2n - 1,\quad n = 0,1,2,3, \ldots .
\]

Positivity condition (\ref{pc}) restricts $n$ from above, i.e. due to condition $2\lambda _0 ^2 a^2  - 2n - 1>0$, there is a maximum value of $n$ as $n_{max}=N=0,1,2,3, \ldots$. Here, $N$ itself satisfies the condition

\be
\label{nr}
N < \lambda _0 ^2 a^2  - \frac{1}{2}.
\ee

Now, taking into account above listed restrictions one easily obtains for the energy spectrum that
\be
\label{eb-n}
E \equiv E_n^{QW}  = \hbar \omega \left( {n + \frac{1}{2}} \right) - \frac{{\hbar ^2 }}{{2m_0 a^2 }}n\left( {n + 1} \right),\quad n = 0,1,2, \ldots,N.
\ee

Hence, an analytical expression of the wavefunctions of the stationary states (\ref{psi2}) also slightly changes as follows:

\be
\label{psi3}
\psi _n \left( \xi  \right) = C_n \left( {1 + \xi } \right)^{n - \lambda _0 ^2 a^2 } \mathrm{e}^{ - \frac{{\lambda _0 ^2 a^2 }}{{1 + \xi }}} \,_1 F_1 \left( {\begin{array}{*{20}c}
   { - n}  \\
   {2\lambda _0 ^2 a^2  - 2n}  \\
\end{array};2\frac{{\lambda _0 ^2 a^2 }}{{1 + \xi }}} \right),\quad n = 0,1,2, \ldots,N.
\ee

Using the following well-known expression of the generalized Laguerre polynomials through $_1F_1$ hypergeometric functions~\cite{koekoek2010}

\[
L_n^\alpha  \left( x \right) = \frac{{\left( {\alpha  + 1} \right)_n }}{{n!}}\,_1 F_1 \left( {\begin{array}{*{20}c}
   { - n}  \\
   {\alpha  + 1}  \\
\end{array};x} \right),
\]
analytical expression of the wavefunctions of the stationary states (\ref{psi3}) also can be written down through these polynomials as follows:

\be
\label{psi}
\psi _n \left( \xi  \right) = C_n \left( {1 + \xi } \right)^{n - \lambda _0 ^2 a^2 } \mathrm{e}^{ - \frac{{\lambda _0 ^2 a^2 }}{{1 + \xi }}} L_n^{2\lambda _0 ^2 a^2  - 2n - 1} \left( {2\frac{{\lambda _0 ^2 a^2 }}{{1 + \xi }}} \right).
\ee

One needs to note that there exist Bessel polynomials from the Askey scheme of the orthogonal polynomials~\cite{koekoek2010}, which are expressed through both $_1F_1$ and $_2F_0$ hypergeometric functions as follows:

\begin{eqnarray}
y_n \left( x;\alpha \right) &=& \left( {n + \alpha  + 1} \right)_n \left( {\frac{x}{2}} \right)^n \,_1 F_1 \left( {\begin{array}{*{20}c}
   { - n}  \\
   { - 2n - \alpha }   \\
\end{array};\frac{2}{x}} \right) \nonumber\\ 
 &=& \,_2 F_0 \left( {\begin{array}{*{20}c}
   { - n,n + \alpha  + 1}  \\
    -   \\
\end{array}; - \frac{x}{2}} \right),\quad n = 0,1,2, \ldots ,N. \nonumber 
\end{eqnarray}

Bessel polynomials $y_n \left( {\frac{{2\xi  + 2d}}{\beta };\alpha } \right)$ satisfy the second-order differential equation of the following form~\cite{koekoek2010}

\be
\label{bessel}
\left( {\xi  + d} \right)^2 y_n ''\left( \xi  \right) + \left[ {\left( {\alpha  + 2} \right)\left( {\xi  + d} \right) + \beta } \right]y_n '\left( \xi  \right) = n\left( {n + \alpha  + 1} \right)y_n \left( \xi  \right),
\ee
where

\[
 - d < \xi  <  + \infty,\quad \alpha  <  - \left( {2N + 1} \right), \quad \beta  > 0,
\]
and

\[
y_n \left( \xi  \right) \equiv y_n \left( {\frac{{2\xi  + 2d}}{\beta };\alpha } \right).
\]

Therefore, wavefunctions (\ref{psi3}) can also be written down in terms of the Bessel polynomials, and the normalization factor $C_n$ for these wavefunctions can be easily computed from the orthogonality relation of these polynomials.

Taking into account that the following orthogonality relation holds for the Bessel polynomials $y_n \left( {\frac{{2\xi  + 2d}}{\beta };\alpha } \right)$~\cite{koekoek2010}:

\[
\int\limits_{ - d}^{ + \infty } {\left( {\xi  + d} \right)^\alpha  \mathrm{e}^{ - \frac{\beta }{{\xi  + d}}} y_m \left( {\frac{{2\xi  + 2d}}{\beta };\alpha } \right)y_n \left( {\frac{{2\xi  + 2d}}{\beta };\alpha } \right)\mathrm{d}\xi }  =  - \frac{{\beta ^{\alpha  + 1} }{\Gamma \left( { - n - \alpha } \right)n!}}{{2n + \alpha + 1}}\delta _{mn} ,
\]
one finds the orthonormalized wavefunctions (\ref{psi3}) or (\ref{psi}) in terms of the Bessel polynomials, which are as follows:

\be
\label{wf-final}
\psi _n^{QW} \left( x \right) = C_n ^{QW} \left( {\frac{x}{a} + 1} \right)^{ - \lambda _0 ^2 a^2 } \mathrm{e}^{ - \frac{{\lambda _0 ^2 a^3 }}{{x + a}}} y_n \left( {\frac{{x + a}}{{\lambda _0 ^2 a^3 }}; - 2\lambda _0 ^2 a^2 } \right),
\ee
where the orthonormalized coefficients $C_n ^{QW}$ are

\be
\label{cnqw}
C_n ^{QW}  = \left( {2\lambda _0 ^2 a^2 } \right)^{\lambda _0 ^2 a^2 } \sqrt {\frac{{2\lambda _0 ^2 a^2  - 2n - 1}}{{2\lambda _0 ^2 a^3 n!\Gamma \left( {2\lambda _0 ^2 a^2  - n} \right)}}}. 
\ee

\subsection{The states of the continuous spectrum ($E>V_\infty$)}

The spectrum of the energy values, satisfying the condition $E>V_\infty$ is continuous and spreads from $V_\infty   = \frac{{m_0 \omega ^2 a^2 }}{2}$ to $\infty$. In this case we have $c_0  > \lambda _0 ^4 a^4  + \frac{1}{4}$. Hence, the value of $\gamma$ defined via Eq.(\ref{gamma}) and consequently, values of $\mu$, $\rho$ and $\delta$ are complex. For example,

\be
\label{gamma2}
\gamma  = \frac{{1 - 2\lambda _0 ^2 a^2  + i\sqrt {4c_0  - 4\lambda _0 ^4 a^4  - 1} }}{2}.
\ee

Then, the eigenfunctions of the continuous spectrum will have the following expression:

\be
\label{psi-c}
\psi _E \left( x \right) = {\mathop{\rm const}\nolimits}  \cdot \left( {\frac{x}{a} + 1} \right)^{ - \gamma  - \lambda _0 ^2 a^2 } \mathrm{e}^{ - \frac{{\lambda _0 ^2 a^3 }}{{x + a}}} \,_1 F_1 \left( {\begin{array}{*{20}c}
   \gamma   \\
   \mu   \\
\end{array};\frac{{2\lambda _0 ^2 a^3 }}{{x + a}}} \right).
\ee

By obtaining exact expressions for the discrete energy spectrum (\ref{eb-n}) and the wavefunctions of the stationary states (\ref{wf-final}) and (\ref{psi-c}) we achieved our main goal. We solved exactly the Schr\"odinger equation (\ref{seq-pdem1}) corresponding to the potential (\ref{vx}). We are going to discuss the main properties of the discrete energy spectrum (\ref{eb-n}) and the wavefunctions of the stationary states (\ref{wf-final}) corresponding to this discrete spectrum in the final section, where we will also show that how both the energy spectrum (\ref{eb-n}) and the wavefunctions of the stationary states (\ref{wf-final}) reduce to the energy spectrum (\ref{en}) and the wavefunctions of the stationary states (\ref{wf-on}) under the limit $a \to +\infty$.

\section{Limit cases and conclusions}

Recently, \cite{jafarov2021,jafarov2022} developed the exactly-solvable model of a one-dimensional nonrelativistic canonical semiconfined quantum harmonic oscillator with a mass that varies with position. Analytical definition of the mass of that model differs from the analytical definition (\ref{m-x}) with the power of the denominator $(a+x)$ that equals to $1$. After, \cite{quesne2022} generalized the same model to the case when the power of the denominator $(a+x)$ of the position-dependent effective mass is greater than $0$, but, less than $2$. One deduces from these results that if the power of the denominator $(a+x)$ with less than $2$ still allows the quantum system to exhibit only the semiconfinement effect, but, with the power of the denominator that equals to $2$ behavior of the quantum system under the potential (\ref{vx}) changes drastically and both infinitely high wall for the negative value and finite wall for the positive value of the position appear.

Let's discuss the results obtained within this paper in more detail. This will allow us to understand the main differences of both the energy spectrum (\ref{eb-n}) and the wavefunctions of the stationary states (\ref{wf-final}) from to the energy spectrum (\ref{en}) and the wavefunctions of the stationary states (\ref{wf-on}). First of all, one needs to highlight the main difference -- it arises during the comparison of (\ref{y1}) with the equation for the Bessel polynomials (\ref{bessel}) as well as through condition (\ref{nr}), from where one obtains that

\be
\label{a-rest}
N < \lambda _0 ^2 a^2  - \frac{1}{2} \Rightarrow a > \frac{1}{{\lambda _0 }}\sqrt {N + \frac{1}{2}}.
\ee

One observes from the expression of the energy spectrum (\ref{eb-n}) that it is discrete, but, unlike the energy spectrum (\ref{en}), it is non-linear and finite, where a number of its finite levels is defined by the semiconfinement parameter $a$ via inequality (\ref{a-rest}). The origin of such behavior is due to that if our potential is confined from the negative side by the infinitely high wall, then the maximal height of the wall from the right side is $\frac{{m_0\omega ^2 a^2}}{2}$ due to definition of the potential (\ref{vx}). Due to that $\min \left( N \right) = 0$, one observes that
\be
\label{min-a}
\min \left( a \right) > \frac{1}{{\sqrt 2 \lambda _0 }}.
\ee

This condition allows us to say that our definition $a>0$ introduced at the initial stage during the definition of the potential (\ref{vx}) is not completely accurate, therefore, should be replaced by a more accurate definition (\ref{min-a}).

\begin{figure}
\centering
\resizebox{0.4\textwidth}{!}{%
  \includegraphics{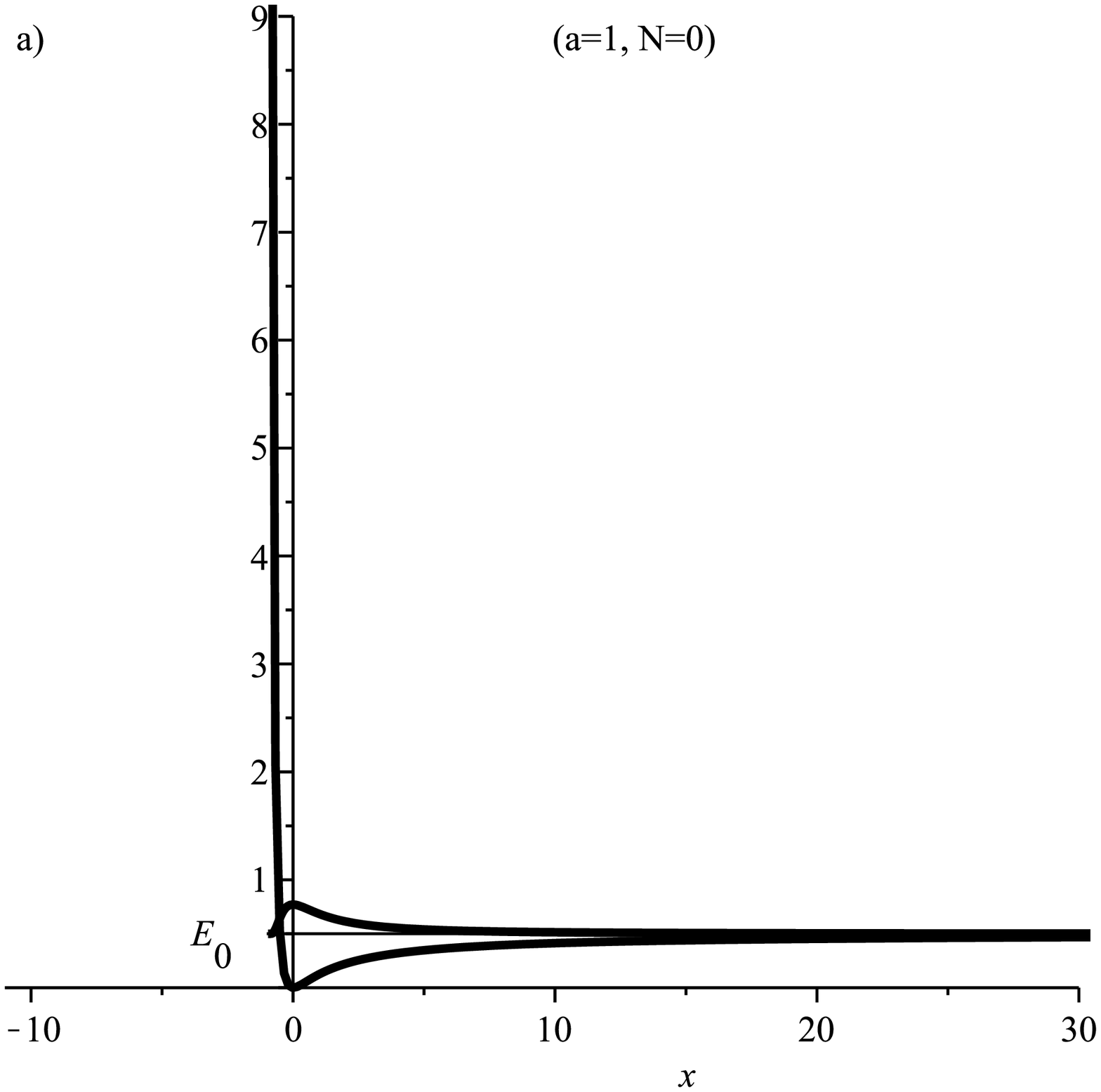}
}
\resizebox{0.4\textwidth}{!}{%
  \includegraphics{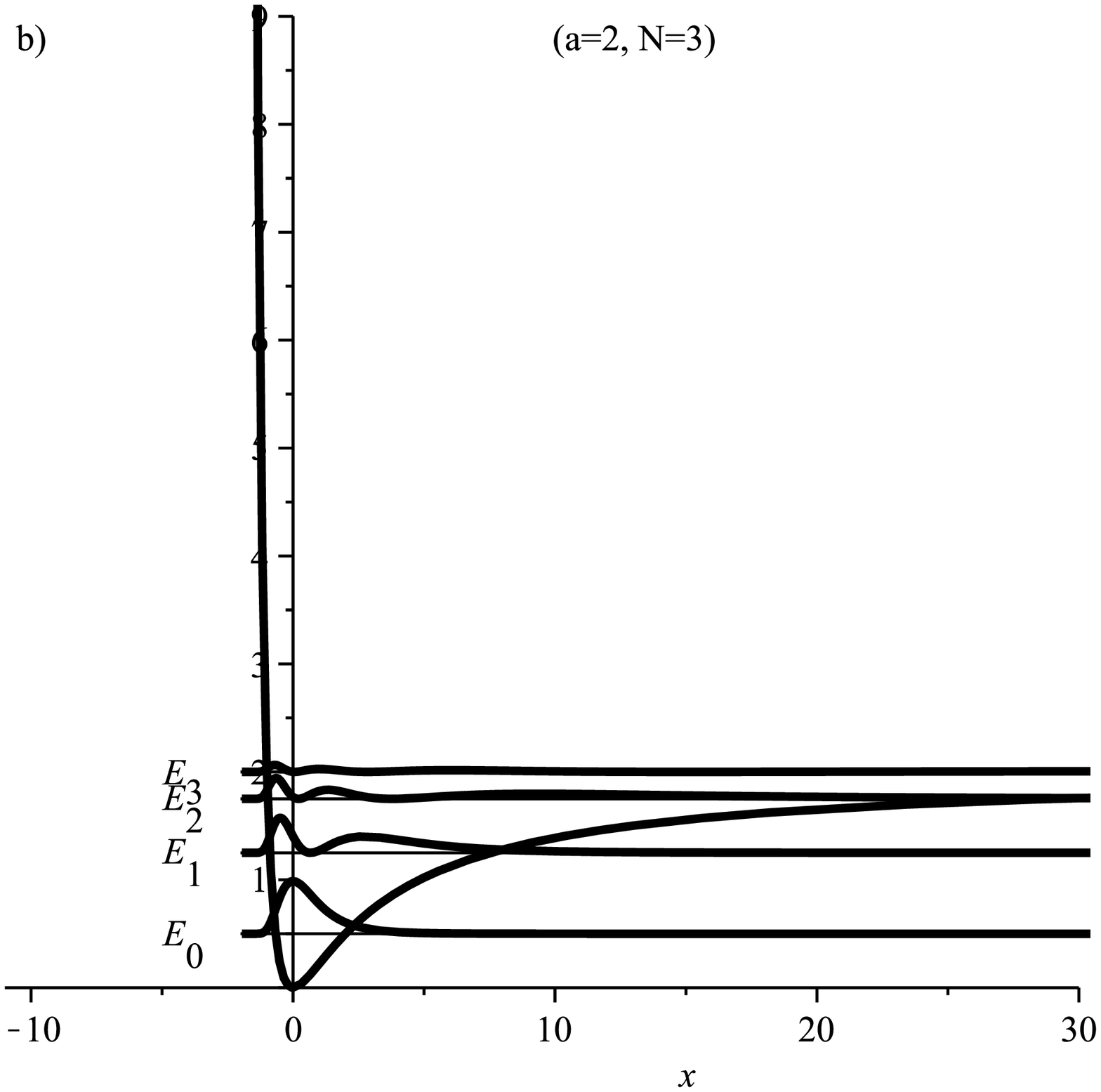}
}\\
\resizebox{0.4\textwidth}{!}{%
  \includegraphics{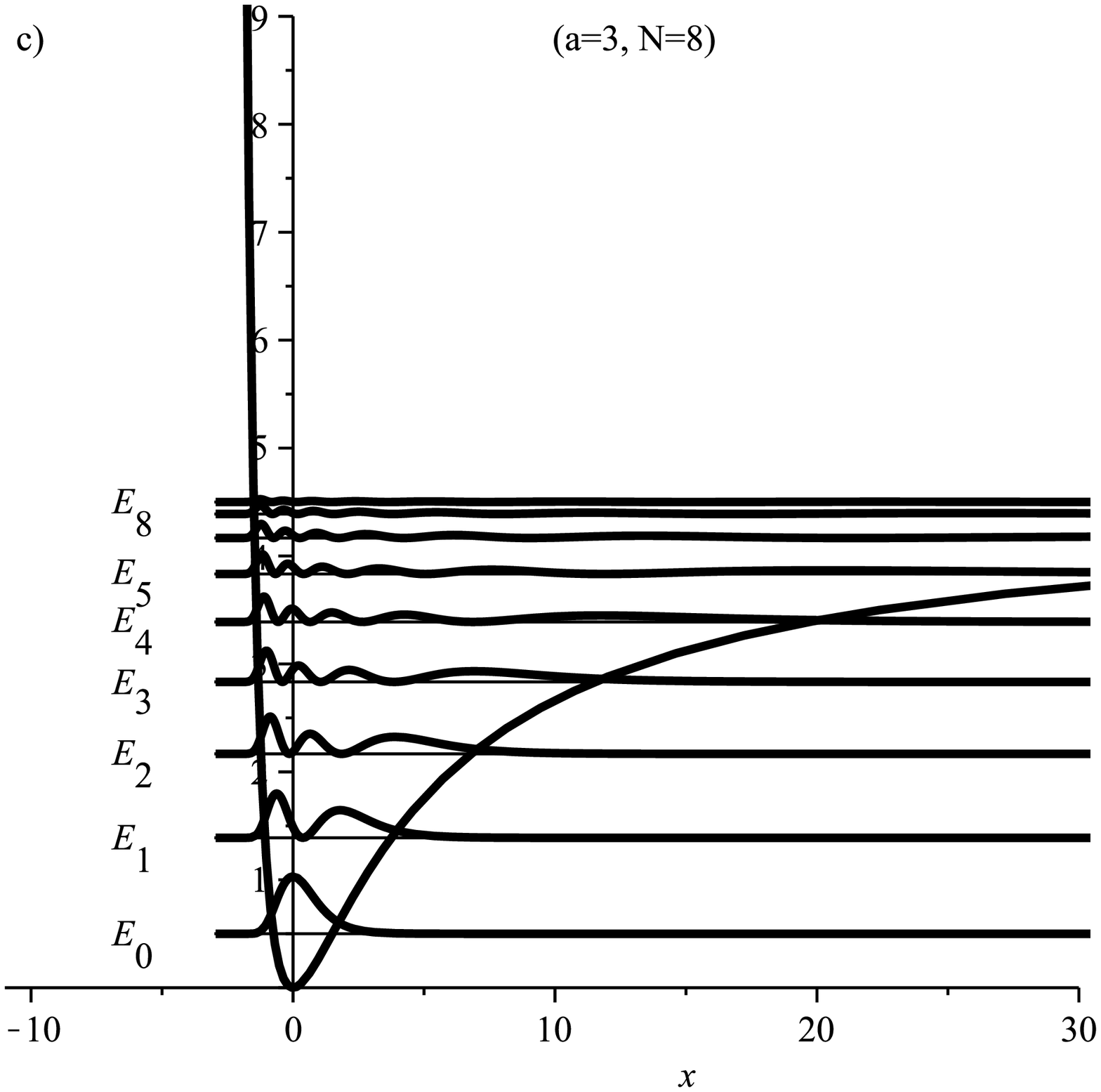}
}
\resizebox{0.4\textwidth}{!}{%
  \includegraphics{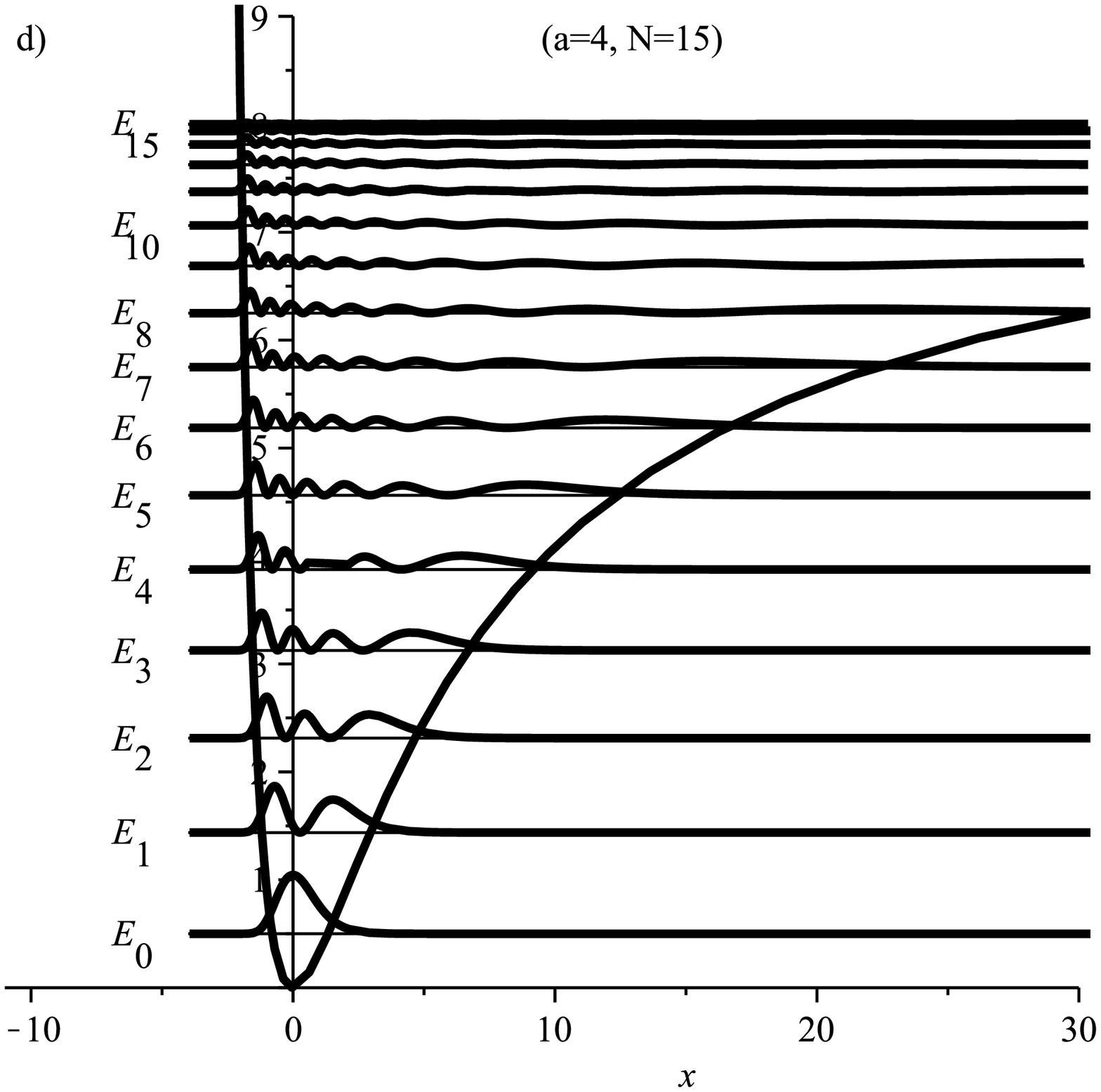}
}
\caption{The semi-infinite quantum well potential~(\ref{vx}) of the non-rectangular step-harmonic profile, the corresponding energy levels~(\ref{eb-n}) and the probability densities $\lvert {\psi _n^{QW} (x)} \rvert^2$ of the wavefunctions of the stationary states~(\ref{wf-final}) for the ground and $N$ excited states for the confinement parameter: a) $a=1$, so $N=0$; b) $a=2$, so $N=3$; c) $a=3$, so $N=8$ and d) $a=4$, so $N=15$ ($m_0=\omega=\hbar=1$, hence, $\lambda _0 = 1$).} 
\label{fig.2}
\end{figure}

In Fig.2, we depicted the semi-infinite quantum well potential~(\ref{vx}) of the non-rectangular step-harmonic profile, the corresponding energy levels~(\ref{eb-n}) and the probability densities $\lvert {\psi _n^{QW} (x)} \rvert^2$ of the wavefunctions of the stationary states~(\ref{wf-final}) for the ground and $N$ excited states, where these states correspond for different values of the confinement parameter $a$, namely, there are plots for $a=1$; $2$; $3$ and $4$. For simplicity, all computations for these plots made for values $m_0=\omega=\hbar=1$ ($\lambda _0 = 1$). One observes when $a=1$, that is a special case of inequality (\ref{min-a}) and corresponds to the plot 1(a), only the ground state exists for the quantum system under construction. This is due to that $\max\left(N\right)=0$ if $a=1$. Such a property of the quantum system under consideration is very attractive due to that from a mathematical viewpoint, the polynomial itself is one if $N=0$. Then, from a physics viewpoint there is only the ground state energy level that equals to $\frac {\hbar \omega}2$ and completely overlaps with the ground state energy level of the non-relativistic quantum harmonic oscillator, but, the wavefunction of the ground state differs from the wavefunction of the ground state of the non-relativistic quantum harmonic oscillator that exhibits the Gaussian distribution. Therefore, one observes from Fig.1a that the probability of finding the quantum system close the infinitely high wall is higher than at values $x=0$ and $x \to \frac{m_0\omega^2 a^2}2$. As a consequence of such a property, excited states appear and their number increases coherently only with the increase of the value of the confinement parameter $a$. Such a behavior can be easily observed from the plots (b)--(d) of Fig. 2. We did not present here the plot when $a \to +\infty$ completely recovers the so-called Hermite oscillator model, but, one can easily compute the limit from the energy (\ref{eb-n}) to the energy spectrum (\ref{en}). Also, it is possible to show that there is a correct limit from the wavefunctions of the stationary states (\ref{wf-final}) to the wavefunctions of the stationary states (\ref{wf-on}) under the case $a \to +\infty$.

During computation of the above-mentioned limit relations, we are going to use the following approximate ($\lvert x \rvert <  < 1$) and asymptotic ($\lvert x \rvert \to \infty$) formulae:

\begin{eqnarray}
\label{apprx}
 \sqrt {1 + x}  \approx 1 + \frac{1}{2}x - \frac{1}{8}x^2  , \\ 
\label{apprx2}
 \frac{1}{{1 + x}} \approx 1 - x + x^2  , \\ 
\label{apprx3}
 \ln \left( {1 + x} \right) \approx x - \frac{1}{2}x^2 , \\ 
\label{apprx4}
 \Gamma \left( x \right) \cong \sqrt {\frac{{2\pi }}{x}} \mathrm{e}^{x\ln x - x} .
\end{eqnarray}

As we noted, one can easily show that the following limit relation is correct:

\[
\mathop {\lim }\limits_{a \to \infty } E_n ^{QW}  = E_n  = \hbar \omega \left( {n + \frac{1}{2}} \right).
\]

Here, one needs to highlight that ground state levels ($n=0$) of both energy spectrum expressions (\ref{eb-n}) and (\ref{en}) already overlap, i.e.:

\[
E_0 ^{QW}  = E_0=\frac{1}{2}\hbar \omega.
\]

Let's rewrite the wavefunctions of the stationary states (\ref{wf-final}) as follows:

\[
\psi _n^{QW} \left( x \right) = C_n ^{QW} \varphi ^{QW} \left( x \right)y_n ^{QW} \left( x \right),
\]
where one has ($b=\lambda_0 a$ is introduced to make simpler the expression of the wavefunction)
\[
C_n ^{QW}  = \left( {2b^2 } \right)^{b^2 } \sqrt {\frac{{\lambda _0 \left( {2b^2  - 2n - 1} \right)}}{{2b^3 n!\Gamma \left( {2b^2  - n} \right)}}} ,
\]
\[
\varphi ^{QW} \left( x \right) = \left( {\frac{{\lambda _0 x}}{b} + 1} \right)^{ - b^2 } e^{ - \frac{{b^3 }}{{\lambda _0 x + b}}} ,
\]
\[
y_n ^{QW} \left( x \right) = y_n \left( {\frac{{\lambda _0 x + b}}{{b^3 }}; - 2b^2 } \right).
\]

Taking into account (\ref{apprx}) one can rewrite that

\[
C_n ^{QW}  \approx \sqrt {\frac{{\lambda _0 }}{{2bn!}}} \frac{{\left( {2b^2 } \right)^{b^2 } }}{{\sqrt {\Gamma \left( {2b^2  - n} \right)} }}.
\]

Here, taking into account Stirling's approximation (\ref{apprx4}) one has

\be
\label{cn-lim}
C_n ^{QW}  \approx \sqrt {\frac{{2^n \lambda _0 }}{{n!\sqrt \pi  }}} b^n \mathrm{e}^{b^2 } .
\ee

Now, let's explore the asymptotical behavior of $\varphi ^{QW} \left( x \right)$. Using (\ref{apprx2}) and (\ref{apprx3}) and as a result of the straightforward computations one obtains that

\be
\label{phi-lim}
\varphi ^{QW} \left( x \right) = \mathrm{e}^{ - b^2 \ln \left( {1 + \frac{{\lambda _0 x}}{b}} \right)} \mathrm{e}^{ - b^2 \left( {1 + \frac{{\lambda _0 x}}{b}} \right)^{ - 1} }  \cong \mathrm{e}^{b^2  - \frac{1}{2}\lambda _0 ^2 x^2 } .
\ee

To obtain the above expression, we did not take into account the terms of the expansions, which do not give any contribution to the limit relation $\mathop {\lim }\limits_{a \to \infty } \psi _n ^{QW} \left( x \right) = \psi _n \left( x \right)$.

Combining together Eqs.(\ref{cn-lim}) and (\ref{phi-lim}) one obtains

\be
\label{cnphi}
C_n ^{QW} \varphi ^{QW} \left( x \right) \cong \sqrt {\frac{{\lambda _0 }}{{n!\sqrt \pi  }}} \left( {\sqrt 2 b} \right)^n \mathrm{e}^{ - \frac{1}{2}\lambda _0 ^2 x^2 } .
\ee

The correct limit that reduces the Bessel polynomials directly to the Hermite polynomials is presented below as a proposition.

\begin{proposition}[Limit relation from $y_n \left( {x;\alpha} \right)$ to $H_n\left(x\right)$]
The Hermite polynomials $H_n\left(x\right)$ follow from the Bessel polynomials $y_n \left( {x;\alpha} \right)$ by setting $x\to 2/\nu+2/\nu\sqrt{2/\nu}x$ and $\alpha \to -\nu$ and then letting $\nu \to \infty$ in the following way:
\be
\label{lim-b-h}
\mathop {\lim }\limits_{\nu  \to \infty } \left( { - 1} \right)^n \left( {2\nu } \right)^{n/2} y_n \left( {\frac{2}{\nu } + \frac{2}{\nu }\sqrt {\frac{2}{\nu }} x; - \nu } \right) = H_n \left( x \right).
\ee
\end{proposition}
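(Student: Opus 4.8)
The plan is to argue by induction on $n$, confronting the three-term recurrence of the Bessel polynomials with the Hermite recurrence~(\ref{rr-h}). Write, for brevity,
\[
\widetilde{y}_n^{(\nu)}(x):=(-1)^n(2\nu)^{n/2}\,y_n\!\left(\frac{2}{\nu}+\frac{2}{\nu}\sqrt{\frac{2}{\nu}}\,x;\,-\nu\right),
\]
so that~(\ref{lim-b-h}) is the claim $\lim_{\nu\to\infty}\widetilde{y}_n^{(\nu)}(x)=H_n(x)$ for each fixed $n$. For every $\nu$ the function $\widetilde{y}_n^{(\nu)}$ is a polynomial of degree $n$ in $x$ whose coefficients are rational in $\nu$, so that pointwise and locally uniform convergence in $x$ coincide here. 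The base cases are immediate: $y_0\equiv1$ gives $\widetilde{y}_0^{(\nu)}\equiv1=H_0(x)$, and from $y_1(t;\alpha)=1+(\alpha+2)t/2$ one obtains, with no approximation needed, $\widetilde{y}_1^{(\nu)}(x)=2x-\frac{4x}{\nu}-\frac{2\sqrt2}{\sqrt{\nu}}\to 2x=H_1(x)$.

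For the inductive step I would use the three-term recurrence of the Bessel polynomials,
\[
y_{n+1}(t;\alpha)=\bigl(p_n t+q_n\bigr)y_n(t;\alpha)+r_n\,y_{n-1}(t;\alpha),
\]
with recurrence coefficients
\[
p_n=\frac{(2n+\alpha+1)(2n+\alpha+2)}{2(n+\alpha+1)},\qquad
q_n=\frac{\alpha\,(2n+\alpha+1)}{(2n+\alpha)(n+\alpha+1)},\qquad
r_n=\frac{n\,(2n+\alpha+2)}{(2n+\alpha)(n+\alpha+1)}
\]
(the standard recurrence data of the Bessel family~\cite{koekoek2010}; being a polynomial identity in $\alpha$ it holds at $\alpha=-\nu$ for every $\nu$ outside the poles of $p_n,q_n,r_n$, in particular for all large $\nu$). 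Substituting $\alpha=-\nu$ and $t=t_\nu(x):=\frac{2}{\nu}+\frac{2}{\nu}\sqrt{2/\nu}\,x$ and multiplying through by $(-1)^{n+1}(2\nu)^{(n+1)/2}$ turns this into a recurrence for the rescaled polynomials,
\[
\widetilde{y}_{n+1}^{(\nu)}(x)=-\,(2\nu)^{1/2}\bigl(p_n(-\nu)\,t_\nu(x)+q_n(-\nu)\bigr)\,\widetilde{y}_n^{(\nu)}(x)+2\nu\,r_n(-\nu)\,\widetilde{y}_{n-1}^{(\nu)}(x).
\]
The crux is the $\nu\to\infty$ asymptotics of its two coefficients. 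Expanding the rational functions in powers of $1/\nu$ (expansions of the type~(\ref{apprx2})) one finds $p_n(-\nu)=-\frac{\nu}{2}+O(1)$, $q_n(-\nu)=1+O(1/\nu)$ and $r_n(-\nu)=-\frac{n}{\nu}+O(1/\nu^2)$, hence $p_n(-\nu)\,t_\nu(x)=-1-\sqrt{2/\nu}\,x+O(1/\nu)$; the order-one term $-1$ cancels the leading $+1$ of $q_n(-\nu)$, leaving $p_n(-\nu)\,t_\nu(x)+q_n(-\nu)=-\sqrt{2/\nu}\,x+O(1/\nu)$, so that $-(2\nu)^{1/2}\bigl(p_n(-\nu)\,t_\nu(x)+q_n(-\nu)\bigr)\to 2x$ while $2\nu\,r_n(-\nu)\to-2n$. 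Feeding in the inductive hypothesis $\widetilde{y}_n^{(\nu)}\to H_n$ and $\widetilde{y}_{n-1}^{(\nu)}\to H_{n-1}$ (both bounded near the chosen $x$ for large $\nu$) and letting $\nu\to\infty$ in the recurrence gives $\widetilde{y}_{n+1}^{(\nu)}(x)\to 2x\,H_n(x)-2n\,H_{n-1}(x)=H_{n+1}(x)$ by~(\ref{rr-h}), which closes the induction.

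I expect the only genuinely delicate point to be exactly the cancellation just used: $p_n(-\nu)\,t_\nu(x)$ and $q_n(-\nu)$ are individually of order one, and only their sum is $O(\nu^{-1/2})$, so the divergent prefactor $(2\nu)^{1/2}$ is indispensable and must be balanced precisely against the surviving term $-\sqrt{2/\nu}\,x$. Carrying this out rigorously amounts to expanding $p_n(-\nu)$, $q_n(-\nu)$ and $t_\nu(x)$ to order $\nu^{-1}$ and verifying that no stray $O(\nu^{-1/2})$ term other than $-\sqrt{2/\nu}\,x$ survives — precisely the bookkeeping that the approximate formulas~(\ref{apprx})--(\ref{apprx3}) are tailored for. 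A minor, purely routine task is to justify the explicit form of the Bessel recurrence coefficients $p_n,q_n,r_n$ if a self-contained argument is wanted; they can be read off from the $_2F_0$ representation of $y_n(\cdot\,;\alpha)$ together with its leading coefficient $(n+\alpha+1)_n/2^n$.
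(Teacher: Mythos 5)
Your proposal is correct and follows essentially the same route as the paper: induction on $n$ via the three-term recurrence of the Bessel polynomials (your $p_n x+q_n$ and $r_n$ are exactly the paper's $A_n$ and $B_n$), with the rescaled coefficients tending to $2x$ and $-2n$ so that the Hermite recurrence~(\ref{rr-h}) closes the induction. Your explicit tracking of the order-one cancellation between $p_n(-\nu)\,t_\nu(x)$ and $q_n(-\nu)$ is a slightly more careful rendering of the limit the paper computes in~(\ref{abar-bbar}).
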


\begin{proof}
We prove the correctness of the above proposition using the mathematical induction technique. We start from the following Rodrigues-type formula for the Bessel polynomials~\cite{koekoek2010}:

\[
y_n \left( {x;\alpha } \right) = 2^{ - n} x^{ - \alpha } \mathrm{e}^{\frac{2}{x}} \frac{{\mathrm{d}^n }}{{\mathrm{d}x^n }}\left( {x^{2n + \alpha } \mathrm{e}^{ - \frac{2}{x}} } \right).
\]

Then, one easily finds that

\be
\label{y1y2}
y_1 \left( {x;\alpha } \right) = \frac{1}{2}\left( {2 + \alpha } \right)x + 1,\quad y_2 \left( {x;\alpha } \right) = \frac{1}{4}\left( {3 + \alpha } \right)\left( {4 + \alpha } \right)x^2  + \left( {3 + \alpha } \right)x + 1.
\ee

Now, one can check that the limit relation (\ref{lim-b-h}) is correct for both $y_1 \left( {x;\alpha } \right)$ and $y_2 \left( {x;\alpha } \right)$, i.e.

\[
 - \mathop {\lim }\limits_{\nu  \to \infty } \sqrt {2\nu } \cdot y_1 \left( {\frac{2}{\nu } + \frac{2}{\nu }\sqrt {\frac{2}{\nu }} x; - \nu } \right) = 2x = H_1 \left( x \right),
\]
\[
\mathop {\lim }\limits_{\nu  \to \infty } 2\nu \cdot y_2 \left( {\frac{2}{\nu } + \frac{2}{\nu }\sqrt {\frac{2}{\nu }} x; - \nu } \right) = 4x^2  - 2 = H_2 \left( x \right).
\]

Let's prove the correctness of the limit relation (\ref{lim-b-h}) for case $n >2$. One uses the following recurrence relation for the Bessel polynomials~\cite{koekoek2010}:

\be
\label{rr-b}
y_{n + 1} \left( {x;\alpha } \right) = A_n y_n \left( {x;\alpha } \right) + B_n y_{n - 1} \left( {x;\alpha } \right),
\ee
where,

\be
\label{an-bn}
A_n  = \frac{{\left( {2n + \alpha  + 1} \right)\left[ {2\alpha  + \left( {2n + \alpha } \right)\left( {2n + \alpha  + 2} \right)x} \right]}}{{2\left( {n + \alpha  + 1} \right)\left( {2n + \alpha } \right)}},\;B_n  = \frac{{n\left( {2n + \alpha  + 2} \right)}}{{\left( {n + \alpha  + 1} \right)\left( {2n + \alpha } \right)}}.
\ee

Now, let the limit relation (\ref{lim-b-h}) is correct for the polynomials $y_{n} \left( {z;-\nu } \right)$ and $y_{n-1} \left( {z;-\nu } \right)$ for arbitrary $n$ and $z =\frac 2\nu + \frac 2\nu \sqrt{\frac 2\nu}x$. Then, this relation also will be correct for case of the polynomial $y_{n+1} \left( {z;-\nu } \right)$. To prove the correctness of this statement, one needs to apply the substitutions $z =\frac 2\nu + \frac 2\nu \sqrt{\frac 2\nu}x$ and $\alpha = - \nu$ at the recurrence relation (\ref{rr-b}), next, multiply its both sides to the factor $\left(-1\right)^{n+1} \left( 2\nu \right)^{\frac{n+1}2}$ and then, take a limit $\nu \to \infty$ from it. We will have

\be
\label{l-bn1}
\mathop {\lim }\limits_{\nu  \to \infty } \left( { - 1} \right)^{n + 1} \left( {2\nu } \right)^{\frac{{n + 1}}{2}} y_{n + 1} \left( {\frac{2}{\nu } + \frac{2}{\nu }\sqrt {\frac{2}{\nu }} x; - \nu } \right) = \bar A_n H_n \left( x \right) + \bar B_n H_{n - 1} \left( x \right).
\ee

Here,

\be
\label{abar-bbar}
\bar A_n  =  - \mathop {\lim }\limits_{\nu  \to \infty } \sqrt {2\nu } A_n ,\;\bar B_n  = \mathop {\lim }\limits_{\nu  \to \infty } 2\nu B_n .
\ee

By substituting (\ref{an-bn}) at (\ref{abar-bbar}) one easily computes that

\[
\bar A_n  = 2x ,\;\bar B_n  =  - 2n.
\]

Then, the limit relation (\ref{l-bn1}) will be rewritten as follows:

\be
\label{l-bn2}
\mathop {\lim }\limits_{\nu  \to \infty } \left( { - 1} \right)^{n + 1} \left( {2\nu } \right)^{\frac{{n + 1}}{2}} y_{n + 1} \left( {\frac{2}{\nu } + \frac{2}{\nu }\sqrt {\frac{2}{\nu }} x; - \nu } \right) = 2xH_n \left( x \right) - 2nH_{n - 1} \left( x \right).
\ee

Now, thanks to the recurrence relations for the Hermite polynomials (\ref{rr-h}), one observes that

\[
\mathop {\lim }\limits_{\nu  \to \infty } \left( { - 1} \right)^{n + 1} \left( {2\nu } \right)^{\frac{{n + 1}}{2}} y_{n + 1} \left( {\frac{2}{\nu } + \frac{2}{\nu }\sqrt {\frac{2}{\nu }} x; - \nu } \right) = H_{n + 1} \left( x \right).
\]

This proves the proposition on the correctness of the limit relation (\ref{lim-b-h}).

\end{proof}

One needs to note that~\cite{lesky1998} studies various properties of the Bessel polynomials. The correct limit that reduces the Bessel polynomials directly to the Hermite polynomials is among these studied properties. However, the limit relation studied there was not suitable for application to the model under present construction. Therefore, we decided to introduce a new limit relation that reduces the Bessel polynomials directly to the Hermite polynomials through (\ref{lim-b-h}) and it is exactly what we need to prove the correct reduction of the wavefunction (\ref{wf-final}) to the wavefunction (\ref{wf-on}) if $a \to \infty$. Despite that Bessel polynomials belong to the family of the polynomials introduced within the finiteness property of the Jacobi-like polynomials~\cite{routh1885,romanovski1929} and there are correct limit relations from the pseudo-Jacobi to Bessel polynomials~\cite{koekoek2010} as well as from the pseudo-Jacobi to Hermite polynomials~\cite{jafarov2021b,nagiyev2022}, but limit relation that reduces the Bessel polynomials directly to the Hermite polynomials through (\ref{lim-b-h}) completes all possible relations between these polynomials and it is also possible to maintain the direct relation between it and the limit from~\cite{lesky1998}.

Now, combining together (\ref{cnphi}) and limit relation (\ref{lim-b-h}) between the Bessel and Hermite polynomials, one observes that the wavefunction (\ref{wf-final}) also correctly reduces to the wavefunction (\ref{wf-on}) if $a \to \infty$, i.e.

\[
\mathop {\lim }\limits_{a \to \infty } \psi _n ^{QW} \left( x \right) = \psi _n \left( x \right).
\]

The wavefunctions of the continuous spectrum (\ref{psi-c}) simply vanish under the limit $a \to \infty$. One can slightly change Eq.(\ref{gamma2}) in terms of $\mu$ as follows:

\be
\label{gamma3}
\mu = 2\left( {\gamma  + \lambda _0 ^2 a^2 } \right) = 1 + i\sqrt {4c_0  - 4\lambda _0 ^4 a^4  - 1} .
\ee

Next, the condition $c_0  > \lambda _0 ^4 a^4  + \frac{1}{4}$ leads to statement that

\[
4c_0  - 4\lambda _0 ^4 a^4  - 1 = q^2  > 0,
\]
i.e., parameter $q$ introduced above is positive and finite. Therefore, it does not depend on parameter $a$. This means that $\mu = 1 + iq$ is also finite and does not depend on parameter $a$. Now, taking into account well-known approximation for the gamma functions

\[
\frac{{\Gamma \left( {z + \alpha } \right)}}{{\Gamma \left( {z + \beta } \right)}} \cong z^{\alpha  - \beta } ,
\]
one observes that

\be
\label{lim1}
\mathop {\lim }\limits_{a \to \infty } \left( {\frac{x}{a} + 1} \right)^{ - \gamma  - \lambda _0 ^2 a^2 }  = \mathop {\lim }\limits_{a \to \infty } \left( {\frac{x}{a} + 1} \right)^{ - \frac{\mu }{2}}  = 1,
\ee
and
\be
\label{lim2}
\mathop {\lim }\limits_{a \to \infty } \mathrm{e}^{ - \frac{{\lambda _0 ^2 a^3 }}{{x + a}}}  \to \mathrm{e}^{ - \lambda _0 ^2 a^2 } , \quad \mathop {\lim }\limits_{a \to \infty } {\kern 1pt} _1 F_1 \left( {\begin{array}{*{20}c}
   \gamma   \\
   \mu   \\
\end{array};\frac{{2\lambda _0 ^2 a^3 }}{{x + a}}} \right) \to {\kern 1pt} _0 F_1 \left( {\begin{array}{*{20}c}
    -   \\
   \mu   \\
\end{array}; - 2\lambda _0 ^4 a^4 } \right).
\ee

Hypergeometric functions $_0F_1$ can be expressed via Bessel functions of first kind $J_\alpha \left(x \right)$ as follows~\cite{prudnikov2002}:

\be
\label{1f1}
_0 F_1 \left( {\begin{array}{*{20}c}
    -   \\
   \mu   \\
\end{array}; - 2\lambda _0 ^4 a^4 } \right) = \Gamma \left( \mu  \right)\left( {2\lambda _0 ^4 a^4 } \right)^{\frac{{1 - \mu }}{{2\lambda _0 ^4 a^4 }}} J_{\mu  - 1} \left( {2\sqrt 2 \lambda _0 ^2 a^2 } \right).
\ee

The following approximation for Bessel functions of first kind $J_\alpha \left(x \right)$ at $\lvert x \rvert \to \infty$ is well known, too:

\be
\label{bes2}
J_\alpha  \left( x \right) \cong \sqrt {\frac{2}{{\pi x}}} .
\ee

Now, combining together (\ref{lim1})-(\ref{bes2}), one observes that

\[
\mathop {\lim }\limits_{a \to \infty } \psi _E \left( x \right) = 0.
\]

In the end, we would like to discuss briefly the factorization possibilities of non-relativistic Hamiltonian (\ref{h}) being the sum of BenDaniel--Duke kinetic energy operator (\ref{h0-pdem}) and the potential (\ref{vx}). In principle, due to that one replaces constant mass with the mass changing with position, then, the method for the factorization is different than the method applied for the ordinary quantum harmonic oscillator, which factorization leads to obtaining its creation and annihilation operators $\hat a^+$ and $\hat a^-$ (\ref{aa+}). The method that can be successfully applied here is already known~\cite{dabrowka1988,cooper1995,plastino1999,gonul2002,dong2007,amir2016}. One writes down the ground state wavefunction extracted from (\ref{wf-final}) as follows:

\be
\label{psi-0}
\psi _0^{QW} \left( x \right) = C_0^{QW} \left( {\frac{x}{a} + 1} \right)^{ - \lambda _0 ^2 a^2 } \mathrm{e}^{ - \frac{{\lambda _0 ^2 a^3 }}{{x + a}}} .
\ee

Next, for simplicity one needs to introduce $\alpha_0\left( x \right)$ and $\rho\left( x \right)$ functions as follows:

\[
\alpha _0 \left( x \right) = \frac{{\psi _0 '\left( x \right)}}{{\psi _0 \left( x \right)}} =  - \frac{{\lambda _0 ^2 a^2 }}{{x + a}} + \frac{{\lambda _0 ^2 a^3 }}{{\left( {x + a} \right)^2 }},\quad \rho \left( x \right) = \frac{{\hbar ^2 }}{{2M}}.
\]

Then, one can factorize Hamiltonian (\ref{h}) being sum of BenDaniel--Duke kinetic energy operator (\ref{h0-pdem}) and the potential (\ref{vx}) in terms of operators $\hat A^+$ and $\hat A^-$ as follows:

\[
\hat H = \hbar \omega \left( {\hat A^ +  \hat A^ -   + \frac{1}{2}} \right),
\]
where,
\begin{eqnarray}
 \hat A^ +   =  - \left( {\frac{\mathrm{d}}{{\mathrm{d}x}} + \alpha _0 \left( x \right)} \right)\sqrt {\frac{\rho \left( x \right)}{\hbar\omega}} , \\ 
 \hat A^ -   = \sqrt {\frac{\rho \left( x \right)}{\hbar \omega}} \left( {\frac{\mathrm{d}}{{\mathrm{d}x}} - \alpha _0 \left( x \right)} \right). \nonumber
 \end{eqnarray}

One can easily prove that the following equation is correct:

\[
\hat A^ -  \psi _0^{QW} \left( x \right) = 0,
\]
as well as the following two limit relations also hold:

\[
\mathop {\lim }\limits_{a \to  + \infty } \hat A^ \pm   = \hat a^ \pm  .
\]

Only the difference here is that both creation and annihilation operators $\hat a^\pm$ together with quantum harmonic oscillator Hamiltonian generate closed Heisenberg--Weyl algebra and it is dynamical symmetry algebra of the non-relativistic oscillator model described through Hermite polynomials. However, here it is impossible to write down the dynamical symmetry algebra of the model under study by a simpler way just employing $\hat A^\pm$ operators. Therefore, being focused on the analytical solution of the model under construction, we leave the problem related to its dynamical symmetry algebra for our future studies.

We constructed an exactly-solvable model of the non-relativistic harmonic oscillator with a position-dependent effective mass. It behaves itself as a semi-infinite quantum well of the non-rectangular profile and its profile is a step-harmonic potential, but with smooth transition between the step and harmonic oscillator potentials as a consequence of the certain analytical dependence of the effective mass from the position and semiconfinement parameter $a$. We were able to solve the Schr\"odinger equation corresponding to the model under construction and obtained that its wavefunctions of the stationary states are expressed through the $_1F_1$ hypergeometric functions if the spectrum is continuous and through the Bessel polynomials if the spectrum is discrete. We also found that the discrete energy spectrum of the model under construction is non-equidistant and finite, but, at the limit, when the parameter $a$ goes to infinity, both wavefunctions, and the energy spectrum correctly recover wavefunctions and the energy spectrum of the usual non-relativistic harmonic oscillator with a constant effective mass.

We are not aware that a similar exactly-solvable model with the wavefunctions expressed through the Bessel polynomials has been before constructed. Therefore, we believe that the exact-solubility advantage of the present model will make it definitely useful in the description of the various phenomena in physics.

\section*{Acknowldgement}

The authors would like to thank the anonymous reviewer for the valuable comments and suggestions, which substantially increased the quality of the paper.

\section*{Conflict of interest}

The authors declare that they have no conflicts of interest.

\section*{Data availability}

The paper has no associated data, however, any additional information regarding depicted figures generated from the computations done above is always available from the corresponding author upon reasonable request.

\end{document}